\titlespacing*{\section}{0pt}{*2}{*0}
\titlespacing*{\subsection}{0pt}{*2}{*0}
\titlespacing*{\subsubsection}{0pt}{*1}{*0}
\newtheorem{theorem}{Theorem}
\newtheorem{proposition}{Proposition}
\newtheorem{definition}{Definition}
\newtheoremstyle{compacttheorem}
{3pt}
{3pt}
{}
{}
{}
{}
{}
{}
\title{Generalizing the Covering Path Problem on a Grid}
\author[1]{\rm Liwei Zeng}
\author[1]{\rm Karen Smilowitz}
\author[2]{\rm Sunil Chopra}
\affil[1]{Department of Industrial Engineering and Management Sciences}
\affil[2]{Kellogg School of Management}
\affil[1,2]{Northwestern University}
\date{}
\date{\vspace{-5ex}}
\begin{document}
\maketitle

\begin{abstract}
We study the covering path problem on a grid of $\mathbb{R}^{2}$. We generalize earlier results on a rectangular grid and prove that the covering path cost can be bounded by the area and perimeter of the grid. We provide (2+$\varepsilon$) and (1+$\varepsilon$)-approximations for the problem on a general grid and on a convex grid, respectively.
\end{abstract}

\textbf{Key words:} covering path problem, grid

\section{Introduction} \label{sec:introduction}
The covering path problem (CPP) finds the cost-minimizing path connecting a subset of points in a network such that non-visited points are within a predetermined distance of a point from the subset. The CPP has been studied in the literature since \cite{current1981multiobjective} introduced the problem and proved its NP-hardness from a reduction of the traveling salesman problem (TSP). Existing work formulates the CPP as an integer linear program (\cite{current1989covering}); however, such a formulation can be challenging to implement in practice for large-scale instances. Recently, \cite{zeng2017covering} leverages the geometric structure of the coverage region and develops simple construction techniques with provable performance guarantee for the CPP on a rectangular grid with $l_{1}$ distance metric. Their work is motivated by school bus stop selection and routing in an urban setting with a grid-like road network. In this paper, we extend the results to more general grids to expand the applicability of the results.

As in \cite{zeng2017covering}, we consider a bi-objective CPP comprised of one cost term related to path length and one cost term related to stop count.

\begin{definition}[\textbf{CPP}]\label{def:CPP}
Given a coverage region $\mathcal{R}$ and a coverage radius $k>0$, the CPP finds a stop set $V\subseteq \mathcal{R}$ such that for every point in $\mathcal{R}$, there exists $v\in V$ at a distance at most $k$, and the minimum cost path $P_{\mathcal{R}}$ connecting all points in $V$. $P_{\mathcal{R}}$ is referred to as a covering path of $\mathcal{R}$. Let $L$ be the path length and let $T=|V|$ be the number of stops. Given $\alpha, \beta>0$, the cost of path $P_{\mathcal{R}}$ is defined as
	\begin{equation}\label{eqn:cost}
	Cost(P_{\mathcal{R}})=\alpha L+\beta T.
	\end{equation}
\end{definition}

Given a covering path $S_{1}\rightarrow S_{2}\rightarrow \cdots\rightarrow S_{T}$, each $S_{i}$ is called a \textit{stop}, $T$ is called the \textit{stop count} and $L=\sum_{i=1}^{T-1}||S_{i+1}-S_{i}||_{1}$ is called the \textit{path length}.

\cite{zeng2017covering} discusses three variants of the CPP on a rectangular grid, deriving results for all variants from fundamental results for the variant where the coverage region is a rectangular grid and stops can be located anywhere in the grid. Here, we extend those fundamental results to more general settings. Our approach also works for variants restricting the coverage region and stop locations to edges and integer points on the grid, respectively, with similar approximation results. We first define a general grid as follows.

\begin{definition}[Grid]\label{def:grid}
	A unit square in $\mathbb{R}^{2}$ is integral if its corners have integer coordinates. $G$ is a grid if it is the union of integral unit squares.
\end{definition}

We use $G$ to represent a general grid with area $A$ and perimeter $P$. For brevity, we use grid rather than general grid throughout the paper. Distance is measured using the $l_{1}$ norm on the grid. For coverage radius $k$ and point $(a,b)$, $D((a,b);k)=\{(x,y)~|~|x-a|+|y-b|\leq k\}$ denotes the diamond coverage region of $(a,b)$. We solve CPP where $\mathcal{R}$ is a general grid.

\begin{wrapfigure}{r}{0.5\textwidth}
	\centering
	\vspace{10pt}
	\includegraphics[width=0.5\textwidth]{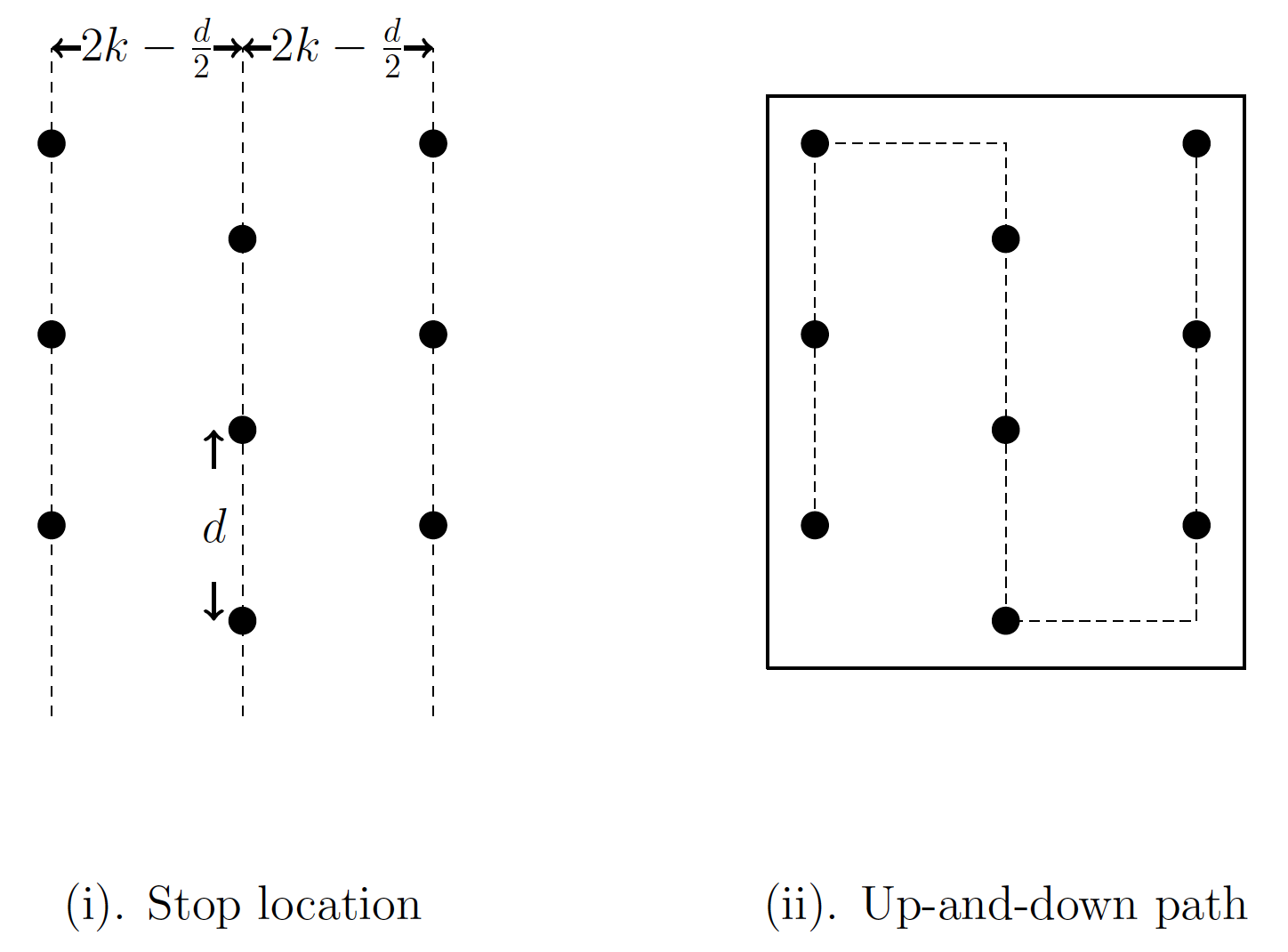}
	\vspace{-10pt}
	\caption{Covering path for a rectangular grid}
	\label{fig:tessellation}
\end{wrapfigure}

We summarize the path construction techniques for a rectangular grid in \cite{zeng2017covering}, which will be modified for a general grid. We use the term \textit{traversal} to represent a vertical line in $\mathbb{R}^{2}$, i.e., $x=constant$. Let $d\in (0,2k]$ be a distance parameter. On each traversal, stops are uniformly located such that the distance between consecutive stops is equal to $d$. To maintain coverage, the distance between consecutive traversals is equal to $(2k-\frac{d}{2})$ (Figure \ref{fig:tessellation} (i)). Stops are then connected in an up-and-down fashion as shown in Figure \ref{fig:tessellation} (ii). For a rectangular grid, \cite{zeng2017covering} shows that there exists $d\in (0,2k]$ such that the up-and-down path using parameter $d$ is near-optimal.

In Section \ref{sec:lower-bound}, we derive cost lower bound for the CPP on a grid using related results in \cite{zeng2017covering}. In Section \ref{sec:upper-bound}, we construct a covering path for a grid that provides $(2+\varepsilon)$-approximation solution for the CPP. We also study a special case where the coverage region is a convex grid and provide $(1+\varepsilon)$-approximation results.

\section{Cost Lower Bound for a General Grid}\label{sec:lower-bound}
\cite{zeng2017covering} presents the following relationship between stop count and path length.
\begin{theorem}[Trade-off Constraint \cite{zeng2017covering}]\label{thm:trade-off}
	Given a grid $G$ with area $A>2k^{2}$, any covering path of $G$ satisfies the trade-off constraint:
	\begin{equation}\label{eqn:trade-off}
	(T-1)f\Big(\frac{L}{T-1}\Big)\geq A-2k^{2},
	\end{equation}
	where $f(\cdot)$ is a function of the average distance between consecutive stops, $d_{avg}$, defined as:
	\begin{equation}
	f(d_{avg})=
	\begin{cases}\label{def:f}
	d_{avg}(2k-\frac{d_{avg}}{2})& \text{if}~~d_{avg}\in (0,2k],\\
	2k^{2}& \text{if}~~d_{avg}\in (2k,\infty).
	\end{cases}
	\end{equation}
\end{theorem}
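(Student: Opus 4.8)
The plan is to establish the trade-off constraint by bounding the coverage contribution of each stop and relating the total coverage to the area $A$. The key geometric insight is that each stop $S_i$ covers a diamond $D(S_i;k)$, and the union of these diamonds must contain all of $G$, so the total covered area is at least $A$. I would analyze how much \emph{new} area each stop contributes as the path is traversed, since naively summing the diamond areas overcounts the overlaps.

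First I would set up the accounting by considering consecutive stops $S_i$ and $S_{i+1}$ at $l_1$-distance $d_i=\|S_{i+1}-S_i\|_1$. The central claim is that the marginal area covered by adding stop $S_{i+1}$ to the union of earlier diamonds is at most $f(d_i)$, where $f$ is the function defined in \eqref{def:f}. Intuitively, when $d_i\le 2k$ the two diamonds $D(S_i;k)$ and $D(S_{i+1};k)$ overlap, and a direct computation of the area of $D(S_{i+1};k)\setminus D(S_i;k)$ in the $l_1$ metric yields the sliver area $d_i(2k-\tfrac{d_i}{2})$; when $d_i>2k$ the diamonds are disjoint, so the full diamond area $2k^2$ is added, giving the capped value. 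The very first stop $S_1$ contributes its entire diamond of area $2k^2$, which is exactly the $-2k^2$ term that will appear on the right-hand side.

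Next I would sum these marginal bounds. Since $G\subseteq\bigcup_{i=1}^{T}D(S_i;k)$, the area satisfies
\begin{equation}
A\le \operatorname{Area}\Big(\bigcup_{i=1}^{T}D(S_i;k)\Big)\le 2k^2+\sum_{i=1}^{T-1}f(d_i).
\end{equation}
The function $f$ is concave on $(0,\infty)$ (it is concave and increasing on $(0,2k]$, then constant), so Jensen's inequality gives $\sum_{i=1}^{T-1}f(d_i)\le (T-1)\,f\big(\tfrac{1}{T-1}\sum_i d_i\big)=(T-1)f\big(\tfrac{L}{T-1}\big)$, since $L=\sum_i d_i$ and $d_{avg}=L/(T-1)$. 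Combining with the area bound and rearranging yields $(T-1)f\big(\tfrac{L}{T-1}\big)\ge A-2k^2$, as required. The hypothesis $A>2k^2$ guarantees the right-hand side is positive so the constraint is nontrivial.

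The main obstacle is the marginal-area claim that each added diamond contributes at most $f(d_i)$, since this must hold against the union of \emph{all} previously placed diamonds, not merely the immediately preceding one. The cleanest route is to bound the marginal contribution of $S_{i+1}$ by $\operatorname{Area}\big(D(S_{i+1};k)\setminus D(S_i;k)\big)$, discarding overlaps with diamonds other than $S_i$; this only strengthens the upper bound and reduces the problem to the two-diamond $l_1$ computation, which is routine. One subtlety to verify is the tightness of the sliver formula for the full range $d_i\in(0,2k]$ and that the concavity of $f$ holds across the piecewise definition (including at the breakpoint $d_{avg}=2k$), so that Jensen's inequality applies to the whole domain. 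Since this theorem is attributed to \cite{zeng2017covering}, I would expect the proof to either reproduce this argument or simply cite the prior work.
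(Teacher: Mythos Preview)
Your anticipation was correct: the paper does not prove this theorem at all. It is stated as a result quoted from \cite{zeng2017covering}, with no argument reproduced here, so there is no in-paper proof to compare your approach against.

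Your sketch is essentially the right argument and would stand as an independent proof. One wording issue is worth flagging: the sliver area $\operatorname{Area}\big(D(S_{i+1};k)\setminus D(S_i;k)\big)$ is not \emph{equal} to $d_i(2k-\tfrac{d_i}{2})$ in general; it depends on the direction of the displacement, not just its $l_1$ length. Passing to rotated coordinates $(u,v)=(x+y,x-y)$ turns the $l_1$ diamonds into axis-aligned squares and the computation into a product of one-dimensional overlaps; one finds the sliver area equals $2k^2-\tfrac12(2k-|u_0|)(2k-|v_0|)$ where $\max(|u_0|,|v_0|)=d_i$. This is \emph{maximized} at $d_i(2k-\tfrac{d_i}{2})$ (axis-parallel displacement) and can be as small as $kd_i$ (diagonal displacement). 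Since you only need the upper bound $\operatorname{Area}\big(D(S_{i+1};k)\setminus D(S_i;k)\big)\le f(d_i)$, the argument goes through unchanged; just replace ``yields the sliver area'' by ``is bounded above by''. Your concavity check is fine: $f$ is $C^1$ at the breakpoint $2k$ with zero derivative there, so Jensen applies on all of $(0,\infty)$.
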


$d_{avg}=\frac{L}{T-1}$ is the average distance between consecutive stops. The function $f(d_{avg})$ is an approximate measure of the unique coverage region for a single stop. To minimize $T$, from \eqref{eqn:trade-off}, we strive to maximize $f(d_{avg})$ (and therefore, $d_{avg}$). To minimize $L$, we tend to choose a smaller $d_{avg}$ so that stops are densely located along the covering path to minimize the number of traversals needed in an up-and-down path (see Figure \ref{fig:tessellation}(ii)). In this way, the average distance $d_{avg}$, together with the function $f(\cdot)$, controls the trade-off between the two objectives, $L$ and $T$.

Using \eqref{eqn:trade-off}, we provide a cost lower bound which is a linear function of the area of the grid.
\begin{proposition}\label{prop:cpp-opt}
	Given $\alpha,\beta>0$, a grid $G$ of area $A$, and coverage radius $k>0$. There exists $\sigma=\sigma(\alpha, \beta, k)>0$ such that 
	\[\alpha L+\beta T\geq \sigma (A-2k^{2}) \]
	for any covering path of $G$.
\end{proposition}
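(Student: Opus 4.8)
The plan is to derive the bound directly from the trade-off constraint in Theorem~\ref{thm:trade-off}, treating $\alpha L + \beta T$ as a quantity to be minimized subject to \eqref{eqn:trade-off} and then showing the minimum is at least a fixed multiple of $(A - 2k^2)$. First I would set $d_{avg} = L/(T-1)$ and rewrite $L = (T-1)\,d_{avg}$, so that the cost becomes $\alpha (T-1) d_{avg} + \beta T$. Since $\beta T \geq \beta (T-1)$ and both $\alpha, \beta > 0$, it suffices to lower-bound $\alpha (T-1) d_{avg} + \beta (T-1) = (T-1)(\alpha\, d_{avg} + \beta)$ in terms of $A - 2k^2$. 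The constraint \eqref{eqn:trade-off} gives $(T-1) \geq (A - 2k^2)/f(d_{avg})$, so
\[
(T-1)\big(\alpha\, d_{avg} + \beta\big) \;\geq\; (A - 2k^2)\,\frac{\alpha\, d_{avg} + \beta}{f(d_{avg})}.
\]
Thus the whole problem reduces to showing that the scalar function $g(d_{avg}) := (\alpha\, d_{avg} + \beta)/f(d_{avg})$ is bounded below by a strictly positive constant $\sigma = \sigma(\alpha,\beta,k)$ over the admissible range $d_{avg} \in (0,\infty)$.

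The second step is to analyze $\inf_{d > 0} g(d)$ and confirm it is positive. On the interval $(0,2k]$ we have $f(d) = d(2k - d/2)$, so $g(d) = (\alpha d + \beta)/\big(d(2k - d/2)\big)$; as $d \to 0^+$ the denominator vanishes while the numerator tends to $\beta > 0$, so $g(d) \to +\infty$, and $g$ is continuous and strictly positive on the compact-after-extension interval, hence bounded below by a positive number there. On $(2k,\infty)$ we have $f(d) = 2k^2$ constant, so $g(d) = (\alpha d + \beta)/(2k^2)$ is increasing and bounded below by its value at $d = 2k$. Taking $\sigma$ to be the minimum of the infima over the two pieces — equivalently $\sigma := \inf_{d>0} g(d)$ — yields a strictly positive constant depending only on $\alpha, \beta, k$, and the desired inequality $\alpha L + \beta T \geq \sigma(A - 2k^2)$ follows immediately by chaining the estimates above.

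I expect the only genuine subtlety to lie in verifying that the infimum of $g$ is attained (or at least strictly positive) rather than being zero; this is where the behavior of $f$ near the boundary matters. The reassuring point is that $f(d) \to 0$ only as $d \to 0^+$, where the numerator stays bounded away from zero because of the $\beta > 0$ term, so $g$ blows up rather than collapses there; away from the origin $f$ is bounded below by a positive constant on any region where the numerator stays bounded, and for large $d$ the numerator grows linearly while $f$ is constant, so $g \to \infty$. Hence $g$ cannot approach zero anywhere on $(0,\infty)$, and a routine continuity/compactness argument pins down a positive $\sigma$. A minor technical check is that the case $T = 1$ (a single stop, $L = 0$) requires $A \leq 2k^2$ by the coverage geometry, so the hypothesis $A > 2k^2$ implicit in applying Theorem~\ref{thm:trade-off} guarantees $T \geq 2$ and the division by $T-1$ is legitimate; for $A \leq 2k^2$ the right-hand side is nonpositive and the bound holds trivially.
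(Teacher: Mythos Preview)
Your argument is correct and rests on the same reduction as the paper: use the trade-off constraint to bound $(T-1)$ from below by $(A-2k^2)/f(d_{avg})$, so that the cost is at least $(A-2k^2)\cdot g(d_{avg})$ with $g(d)=(\alpha d+\beta)/f(d)$, and then take $\sigma=\inf_{d>0}g(d)$. Where you diverge is in how you handle that last infimum. The paper carries out the optimization explicitly: it parametrizes the constraint boundary, writes the cost as a one-variable function $C(L)$, differentiates, and solves for the minimizer in closed form, obtaining an explicit expression $\sigma=\alpha\gamma+\beta\gamma^{2}/(4k\gamma-2)$ together with the optimal average spacing $d_{avg}^{*}$. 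You instead give a soft argument---$g\to\infty$ as $d\to 0^{+}$, $g$ is continuous and positive on $(0,2k]$, and $g$ is increasing on $[2k,\infty)$---which is enough to conclude $\sigma>0$ without ever computing it.

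Both routes prove the proposition as stated, and in fact your $\sigma$ coincides with the paper's (the paper later records the identity $\sigma=(\alpha d_{avg}^{*}+\beta)/f(d_{avg}^{*})$, which is exactly $g$ evaluated at its minimizer). The trade-off is that the paper's explicit calculation delivers the value $d_{avg}^{*}$ and the formula \eqref{eqn:sigma}, both of which are needed downstream in Section~\ref{sec:upper-bound} to match the upper-bound construction to the lower bound; your argument, while cleaner for the proposition in isolation, would need to be supplemented by the observation that the infimum is attained at some $d_{avg}^{*}\in(0,2k]$ before it could feed into those later estimates.
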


\begin{proof}
	From Theorem \ref{thm:trade-off}, the minimum value of $\alpha L+\beta T$ subject to the trade-off constraint \eqref{eqn:trade-off} is a lower bound of path cost. We show that this minimum value is at least $\sigma (A-2k^{2})$.
	
	Let $A_{0}=A-2k^{2}$ and let $T_{0}=T-1$. \eqref{eqn:trade-off} is equivalent to $T_{0}f(\frac{L}{T_{0}})\geq A_{0}$. Note that $\alpha L+\beta T=\alpha L+\beta T_{0}+\beta$, we consider the following optimization problem.
	
	\begin{align*} 
	&min~~\alpha L+\beta T_{0}+\beta \label{prb:1} \tag{OPT} \\
	&s.t.~~T_{0}f(\frac{L}{T_{0}})\geq A_{0}
	\end{align*}
	
	 To solve \eqref{prb:1}, since $f(\cdot)$ is an increasing function with upper bound $f(2k)$, it suffices to consider $(L,T_{0})$ pairs such that
	 
	 \[T_{0}f(\frac{L}{T_{0}})=A_{0}~~ \textrm{and}~~ \frac{L}{T_{0}}\leq 2k.\]
Expanding $f(\frac{L}{T_{0}})$, we have

\[L(2k-\frac{L}{2T_{0}})=A_{0}~~ \textrm{and}~~ T_{0}=\frac{L^{2}}{2(2kL-A_{0})}.\]
The objective function of \eqref{prb:1} can be written as a function $C(L)$ of $L$ where: 
	 
	 \[C(L)=\alpha L+\beta T_{0}+\beta=\alpha L+ \frac{\beta L^{2}}{2(2kL-A_{0})}+\beta.\]
	  To find the value of $L$ that minimizes $C(L)$, we set the first derivative to 0 to get
	  
	  \[C^{'}(L)=\alpha+\frac{\beta (kL^{2}-A_{0}L)}{(2kL-A_{0})^{2}}=0,\]
	  which implies
	  \[L=\frac{4\alpha k+\beta\pm\sqrt{(4\alpha k+\beta)\beta} }{2k(4\alpha k+\beta)} A_{0}.\]

	Note that $T_{0}=\frac{L^{2}}{2(2kL-A_{0})}\geq 0$ implies $L>\frac{A_{0}}{2k}$. Thus, $\gamma A_{0}$ is the only zero point of $C^{'}(L)$ in interval $(\frac{A_{0}}{2k}, \infty)$, where
	\[\gamma=\frac{4\alpha k+\beta+\sqrt{(4\alpha k+\beta)\beta} }{2k(4\alpha k+\beta)}.\]
	Since $C^{'}(L)$ is non-positive in $(\frac{A_{0}}{2k},\gamma A_{0})$, is equal to 0 at $\gamma A_{0}$, and is non-negative in $(\gamma A_{0},\infty)$, $\gamma A_{0}$ is the minimizer of $C(L)$. Therefore, the optimal solution to \eqref{prb:1} is:
	
	\[L^{*}=\gamma A_{0}, T^{*}_{0}=\frac{L^{2} }{4kL-2A_{0} }=\frac{\gamma^{2}}{4k\gamma-2}A_{0},\] 
	and the minimum path cost is $\alpha L^{*}+\beta T^{*}_{0}+\beta=(\alpha \gamma+\frac{\beta \gamma^{2}}{4k\gamma-2})A_{0}+\beta$. Let $\sigma=(\alpha \gamma+\frac{\beta \gamma^{2}}{4k\gamma-2})$. The path cost is at least $\sigma A_{0}+\beta\geq \sigma(A-2k^{2})$.
\end{proof}

Let $d_{avg}^{*}=\frac{L^{*}}{T^{*}-1}=\frac{L^{*}}{T_{0}^{*}}$. We rewrite $\sigma$ as a function of $d_{avg}^{*}$, which is used in Section \ref{sec:upper-bound} for cost upper bound estimation. Since $T_{0}^{*}f(\frac{L^{*}}{T_{0}^{*}})=A_{0}$, we have
\[T_{0}^{*}=\frac{A_0}{f(d_{avg}^{*})}, L^{*}=\frac{A_{0}d_{avg}^{*}}{f(d_{avg}^{*})}.\]
Given that $\alpha L^{*}+\beta T_{0}^{*}=\sigma A_{0}$, we have

\begin{equation}
\sigma=(\alpha L^{*}+\beta T_{0}^{*})/A_{0}=\Big(\alpha \cdot \frac{A_{0}d_{avg}^{*}}{f(d_{avg}^{*})}+\beta \cdot \frac{A_0}{f(d_{avg}^{*})}\Big)/A_{0}=\frac{\alpha d_{avg}^{*}+\beta}{f(d_{avg}^{*})}.
 \label{eqn:sigma}
\end{equation}

For a rectangular grid, \cite{zeng2017covering} minimizes $\alpha L+\beta T$ subject to \eqref{eqn:trade-off} to get an optimal $(L^{*},T^{*})$ and an optimal average distance $d_{avg}^{*}=\frac{L^{*}}{T^{*}-1}$. The up-and-down path in Figure \ref{fig:tessellation}(ii) using $d_{avg}^{*}$ is shown to be near-optimal for a rectangular grid. However, as we explore next, new path construction techniques are needed for a general grid.

\section{Cost Upper Bound for a General Grid}\label{sec:upper-bound}
We construct a covering path for a grid such that the path cost is bounded by the area and perimeter of the grid, which provides a (2+$\varepsilon$)-approximation for the CPP on a grid when the area grows significantly faster than the perimeter. The approximation ratio can be improved to (1+$\varepsilon$) when the grid is convex.

\begin{theorem}\label{thm:general-cost}
Given a grid $G$ of area $A$ and perimeter $P$, there exists a covering path $P_{G}$ and constants $c_{1},c_{2}$ such that
\[Cost(P_{G})\leq 2\sigma (A+c_{1}kP+c_{2}k^{2})\]
where $\sigma=\sigma(\alpha, \beta, k)$ is as defined in Proposition \ref{prop:cpp-opt}.
\end{theorem}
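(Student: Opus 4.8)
The plan is to construct an explicit covering path for a general grid $G$ and bound its cost by $2\sigma(A + c_1 k P + c_2 k^2)$. The natural strategy, building on the rectangular-grid techniques summarized in the introduction, is to use the up-and-down traversal scheme with the optimal spacing parameter $d_{avg}^*$ from Proposition~\ref{prop:cpp-opt}, but adapted so that it handles a grid of arbitrary (non-rectangular) shape. The key quantity I want to control is the total path length and stop count, which I will express in terms of the area $A$ (the "bulk" contribution) plus correction terms proportional to $kP$ and $k^2$ (the "boundary" contributions arising because a general grid has an irregular boundary).

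**First I would** set up the traversal structure. I place vertical traversals at horizontal spacing $(2k - \tfrac{d_{avg}^*}{2})$ across the horizontal extent of $G$, and on each traversal I place stops uniformly at vertical spacing $d_{avg}^*$, exactly as in Figure~\ref{fig:tessellation}. The crucial difference from the rectangular case is that each traversal now intersects $G$ in a union of vertical segments whose total length varies along the grid, and stops must cover the full vertical extent of $G$ intersected with a horizontal strip of width $(2k-\tfrac{d_{avg}^*}{2})$ around that traversal. The number of stops on a given traversal is therefore roughly (length of $G$ covered by that strip)$/d_{avg}^*$, plus $O(1)$ endpoint corrections per connected segment. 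Summing over all traversals, the dominant term is $A/f(d_{avg}^*)$ stops (matching $T_0^*$), while the endpoint and boundary-overshoot corrections accumulate to something proportional to $P/d_{avg}^*$ and to the number of traversals, i.e., proportional to $kP$ and $k^2$ after multiplying through by the relevant spacings.

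**Then I would** bound the two cost components separately and recombine them using the formula $\sigma = \frac{\alpha d_{avg}^* + \beta}{f(d_{avg}^*)}$ from \eqref{eqn:sigma}. For the stop count $T$, the bulk term contributes $\approx A/f(d_{avg}^*)$ and the boundary terms contribute $O(P/d_{avg}^* + (\text{\#traversals}))$. For the path length $L$, the vertical portions contribute $\approx d_{avg}^* \cdot (\text{\#stops})$ while the horizontal connectors between traversals contribute an amount governed by the horizontal extent of $G$ and hence by $P$. Substituting these into $\alpha L + \beta T$ and factoring out $\sigma$, the leading term becomes $\sigma A_0 = \sigma(A - 2k^2)$, and the boundary corrections get absorbed into $\sigma \cdot c_1 k P$ and $\sigma \cdot c_2 k^2$. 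The factor of $2$ in the theorem is the slack I expect to lose: unlike the clean rectangular case where the path length and stop count are both tight against the lower bound, on a general grid the horizontal connectors and the per-segment endpoint stops are not simultaneously optimizable, so I anticipate needing to bound each of $\alpha L$ and $\beta T$ against $\sigma A_0$ somewhat loosely, yielding the $2\sigma$ constant.

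**The hard part will be** the boundary accounting: controlling the overshoot at the top and bottom of each traversal segment and the cost of stitching together traversals whose intersections with $G$ may be disconnected or of wildly varying length. On a rectangular grid each traversal meets $G$ in a single full-height segment, so the endpoint corrections are a clean $O(1)$ per traversal; on a general grid a single traversal may cut $G$ into many short segments, each requiring its own endpoint stops and its own connection into the path, and the number and total length of these segments is exactly what the perimeter $P$ controls. Making this precise — showing that the total boundary overshoot is $O(kP)$ and not worse — is where I expect to invest the most care, likely via an isoperimetric-style argument relating the number of traversal-boundary crossings to $P$, together with the observation that each crossing contributes at most $O(k)$ extra length and $O(1)$ extra stops. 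Once that bound is in hand, the algebra of substituting into \eqref{eqn:sigma} and collecting the $A$, $kP$, and $k^2$ terms is routine.
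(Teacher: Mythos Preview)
Your high-level plan---use the traversal scheme with spacing $d_{avg}^*$, bound the stop count by $A/f(d_{avg}^*)$ plus $O(kP)/f(d_{avg}^*)$ boundary corrections, and recombine via $\sigma = (\alpha d_{avg}^* + \beta)/f(d_{avg}^*)$---matches the paper. But two points diverge in a way that matters.

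First, your account of the factor of $2$ is off. You attribute it to ``loosely bounding each of $\alpha L$ and $\beta T$ against $\sigma A_0$,'' but in fact the stop-count bound carries no factor of $2$ at all; the doubling lives entirely in the path-length estimate. The paper does \emph{not} attempt to build an up-and-down path directly. Instead it constructs a \emph{spanning tree} of the stop set: connect consecutive stops on each traversal (cost $\le |C_{in}|\,d_{avg}^*$), connect each resulting component to the boundary of $G$, and then run once around the boundary (cost $P$) to link everything, including the projected boundary stops. This gives an MST of length at most $|C_{in}|\,d_{avg}^* + P$, and the factor of $2$ is the classical $\mathrm{TSP} \le 2\cdot\mathrm{MST}$ bound.

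Second, the ``hard part'' you flag---stitching together traversals that intersect $G$ in many disconnected segments---is precisely what the spanning-tree-plus-boundary trick sidesteps. Your up-and-down construction has no obvious way to produce a single connected path when traversal segments are scattered, and the isoperimetric argument you sketch would control the \emph{number} of crossings but not how to route between them without backtracking. The boundary of $G$ is the missing highway: it is already a connected curve of length $P$ that touches every traversal component (once you hook each component to it with one short edge), so connectivity comes for free at cost $P$, and tree-doubling converts that into a path. Your boundary-accounting instincts are right for the stop count (the paper bounds $|C_{out}|$ via the area of a $2k$-neighborhood of $\partial G$), but for the path you should abandon the direct up-and-down construction and route through the spanning tree instead.
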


\cite{zeng2017covering} shows that the cost lower bound in Proposition \ref{prop:cpp-opt} is near-optimal when $G$ is a large rectangular grid (i.e. minimum cost $\approx \sigma A$). The results in Theorem \ref{thm:general-cost} look similar to that in \cite{zeng2017covering} when $A>>kP$, with an additional 2-approximation factor. As shown in Section \ref{subsec:cpp-convex}, the 2-approximation factor disappears when the grid is convex. Theorem \ref{thm:general-cost}, together with the lower bound in Proposition \ref{prop:cpp-opt}, gives a (2+$\varepsilon$)-approximation for grids where $A>>kP$. However, it differs significantly from \cite{zeng2017covering} when $A$ and $P$ are of the same order, which can occur in general.

\begin{wrapfigure}{r}{0.2\textwidth}
	\centering
	\vspace{-10pt}
	\includegraphics[width=0.2\textwidth]{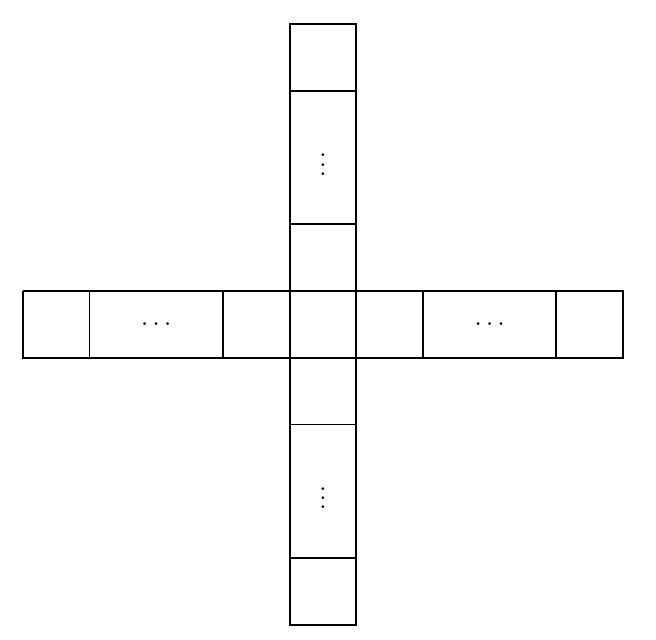}
	\vspace{-10pt}
	\caption{Cross}
	\label{fig:cross}
\end{wrapfigure}

Consider the grid in Figure \ref{fig:cross} with $n$ unit squares on each of the four arms. The grid has area $A=4n+1$ and perimeter $P=8n+4$. $A$ is no longer the dominating cost term since $A\approx 4n$ and $c_{1}kP\approx 8c_{1}kn$. Consider a special cost function $Cost=L$ with minimum value $L_{min}$. Results in \cite{zeng2017covering} show that 

\[L_{min}\geq \frac{A-2k^{2}}{2k}\approx\frac{2n}{k}=O\Big(\frac{n}{k}\Big)\]
while Theorem \ref{thm:general-cost} implies
\[L_{min}\leq \frac{A+c_{1}kP+c_{2}k^{2}}{2k}\approx 4c_{1}n=O(n).\]
It is not hard to see that the true value of $L_{min}$ is close to $8n$ (we have to go back and forth on each arm with cost $2n$), which means the bound in Theorem \ref{thm:general-cost} is tighter than that in \cite{zeng2017covering}.

Our proof strategy for Theorem \ref{thm:general-cost} is as follows: we solve the problem of minimizing $\alpha L+\beta T$ subject to the trade-off constraint \eqref{eqn:trade-off} to get an optimal $(L^{*},T^{*})$ and average distance $d_{avg}^{*}=\frac{L^{*}}{T^{*}-1}$. In \ref{subsec:stop-general} and \ref{subsec:path-general}, we use this $d_{avg}^{*}$ to construct a covering path of $G$. We provide bounds on the path length and stop count of the covering path and combine them to get a cost upper bound.

\subsection{Stop Selection}\label{subsec:stop-general}

\begin{wrapfigure}{r}{0.2\textwidth}
	\centering
	\includegraphics[width=0.2\textwidth]{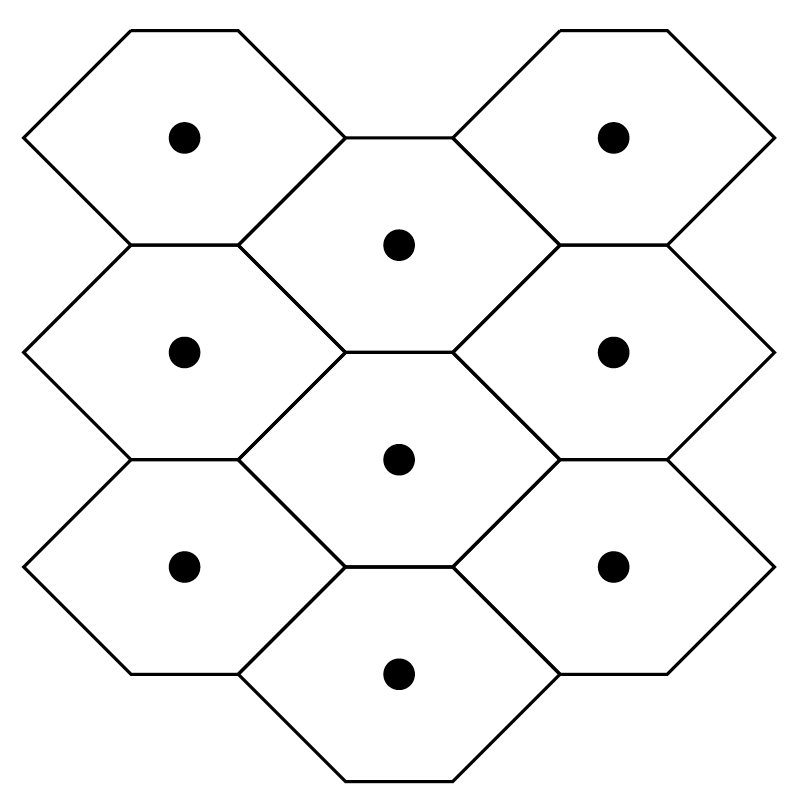}
	\caption{Tessellation of $\mathbb{R}^{2}$}
	\vspace{-10pt}
	\label{fig:tessellation1}
\end{wrapfigure}

Consider the stop locations in Figure \ref{fig:tessellation} (i) with parameter $d=d_{avg}^{*}$. We define a partition of $\mathbb{R}^{2}$ such that each point is assigned to the closest stop and that each stop covers a unique hexagon region of area $f(d_{avg}^{*})$ (Figure \ref{fig:tessellation1}). For a grid $G$, we select all hexagons from the tessellation that overlap with $G$. We construct a stop set such that each point in $G$ is covered by at least one point in the stop set.

Among the centers of these hexagons, let $C_{in}$ be the ones inside $G$ and let $C_{out}$ be the ones outside $G$. We first put all points in $C_{in}$ into the stop set. For each $X_{out}\in C_{out}$, \cite{zeng2017covering} shows that we can project $X_{out}$ to a point on the boundary of $G$ and maintain coverage when $G$ is a rectangular grid. Up to four additional points may be necessary to ensure coverage for a general grid.

\begin{wrapfigure}{r}{0.3\textwidth}
	\centering
	\vspace{15pt}
	\includegraphics[width=0.3\textwidth]{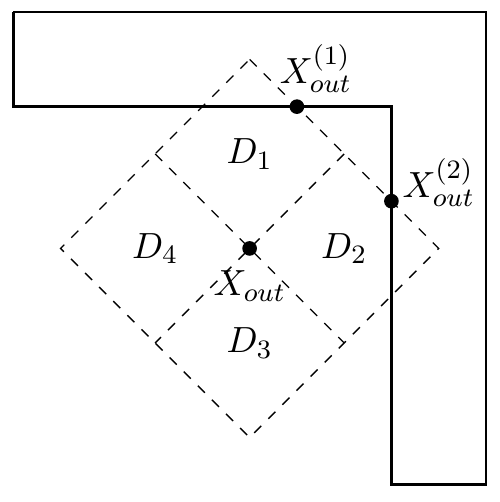}
	\vspace{10pt}
	\caption{Partition of $D(X_{out};k)$ and stop selection for $X_{out}$}
	\label{fig:Diamond partition}
\end{wrapfigure}

We partition the diamond coverage region of $X_{out}$ into four smaller diamonds $D_{1}, D_{2}, D_{3}$ and $D_{4}$ (see Figure \ref{fig:Diamond partition}, the coverage region can be viewed as part of the cross in Figure \ref{fig:cross}). For each $i\in \{1,2,3,4\}$ such that $D_{i}\cap G\neq \emptyset$, note that $X_{out}\in D_{i}$ and $X_{out}\notin G$. There exists a point $X_{out}^{(i)}\in D_{i}$ such that $X_{out}^{(i)}$ is on the boundary of $G$ (see $X_{out}^{(1)}$ and $X_{out}^{(2)}$ in Figure \ref{fig:Diamond partition}). Note that there can be multiple choices for each $X_{out}^{(i)}$. We put $X_{out}^{(i)}$ in the stop set for each $i$ such that $D_{i}\cap G\neq \emptyset$. In total, we select one stop for each point in $C_{in}$ and at most four stops for each point in $C_{out}$. Thus, the total stop count is at most $|C_{in}|+4|C_{out}|$.

\begin{proposition}\label{prop:cover-general}
Each point in $G$ is covered by at least one selected stop.
\end{proposition}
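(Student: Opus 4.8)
The plan is to argue pointwise: take an arbitrary $p\in G$ and exhibit a selected stop within $l_{1}$ distance $k$ of $p$. Since the hexagons tessellate $\mathbb{R}^{2}$, the point $p$ lies in a unique hexagon $H$, and because $p\in G$ this hexagon meets $G$ and was therefore selected. Let $X$ be the center of $H$. The coverage property of the tessellation (each hexagon is contained in the coverage diamond of its center) gives $\|p-X\|_{1}\le k$, i.e. $p\in D(X;k)$. I would then split into the two cases $X\in C_{in}$ and $X\in C_{out}$.

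If $X\in C_{in}$, then $X$ itself is a selected stop and $\|p-X\|_{1}\le k$, so $p$ is covered and there is nothing more to do. The only real work is in the case $X=X_{out}\in C_{out}$, where $X_{out}$ is discarded and replaced by up to four boundary stops $X_{out}^{(i)}$. First I would record the geometry of the partition: cutting $D(X_{out};k)$ along the two diagonals through $X_{out}$ produces four congruent pieces $D_{1},D_{2},D_{3},D_{4}$, each of which is itself an $l_{1}$ ball of radius $k/2$ (for instance the right piece is $D\big((a+\tfrac{k}{2},b);\tfrac{k}{2}\big)$ when $X_{out}=(a,b)$), and each has $X_{out}$ as one of its vertices.

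Given this, the argument is short. Since $p\in D(X_{out};k)$ and the $D_{i}$ partition this diamond, $p$ lies in some $D_{i}$. Because $p\in G$ as well, we have $D_{i}\cap G\neq\emptyset$, so by construction a stop $X_{out}^{(i)}\in D_{i}$ lying on the boundary of $G$ was placed in the stop set. Now both $p$ and $X_{out}^{(i)}$ belong to the same small diamond $D_{i}$, which is an $l_{1}$ ball of radius $k/2$; by the triangle inequality its $l_{1}$ diameter is at most $2\cdot\tfrac{k}{2}=k$, hence $\|p-X_{out}^{(i)}\|_{1}\le k$ and $p$ is covered by $X_{out}^{(i)}$.

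I expect the main obstacle to be purely geometric rather than computational: identifying the partition of $D(X_{out};k)$ so that each cell is a radius-$k/2$ $l_{1}$ ball (and not merely a triangle), and checking that the point $p$ to be covered and its replacement stop $X_{out}^{(i)}$ necessarily land in the \emph{same} cell $D_{i}$. Once those two facts are in place, the triangle-inequality diameter bound closes the argument immediately. A secondary point to state carefully is the coverage property of the tessellation used at the start, namely that the hexagon cell of each stop is contained in that stop's coverage diamond, so that $p\in H$ indeed forces $\|p-X\|_{1}\le k$.
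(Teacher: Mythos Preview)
Your argument is correct and matches the paper's proof essentially line for line: reduce to the case $X\in C_{out}$, use that the four pieces $D_i$ are $l_1$ balls of radius $k/2$ and hence have $l_1$ diameter $k$, and conclude that $X_{out}^{(i)}$ covers all of $D_i\cap G$. One small wording issue: the cuts that produce the four sub-diamonds are the $45^\circ$ lines $y-b=\pm(x-a)$ through $X_{out}$ (parallel to the sides of the big diamond), not the horizontal/vertical ``diagonals''; your explicit example $D\big((a+\tfrac{k}{2},b);\tfrac{k}{2}\big)$ shows you have the right picture, so this is purely terminological.
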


\begin{proof}
Recall that $D(X;k)$ is the diamond region covered by $X$ with coverage radius $k$. We have $G\subset \cup_{X\in C_{in}\cup C_{out}} D(X;k)$. All points in $C_{in}$ are in the stop set. Thus, it suffices to show that $D(X_{out};k)\cap G$ is covered by selected stops for any $X_{out}\in C_{out}$. Let $D_1, D_{2}, D_{3}$ and $D_{4}$ be the four diamonds from the partition of $D(X_{out};k)$. $\forall i\in \{1,2,3,4\}$, since $D_{i}$ is a diamond of radius $\frac{k}{2}$, each pair of points in $D_{i}$ are within distance at most $k$ of each other. $\forall i\in \{1,2,3,4\}$, if $D_{i}\cap G\neq \emptyset$, the selected stop $X_{out}^{(i)}$ covers all points in $D_{i}$, therefore, covers all points in $D_{i}\cap G$.
\end{proof}

The following proposition provides an upper bound on the number of selected stops.

\begin{proposition}\label{prop:stop-count-general}
$|C_{in}|+|C_{out}|\leq \frac{A+4kP+8k^{2}}{f(d_{avg}^{*})}$, $|C_{out}|\leq \frac{4kP+8k^{2}}{f(d_{avg}^{*})}$.
\end{proposition}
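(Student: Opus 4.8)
The plan is to bound each cardinality by a disjointness/area‑packing argument, using two structural facts about the tessellation cells. First, the hexagons tile $\mathbb{R}^{2}$, so they have pairwise disjoint interiors, and each has area exactly $f(d_{avg}^{*})$. Second, the hexagon $H_{X}$ centered at $X$ is contained in the diamond $D(X;k)$: by construction every point of $H_{X}$ is covered by its own center $X$ and hence lies within distance $k$ of $X$, so $H_{X}\subseteq D(X;k)$ and in particular any two points of $H_{X}$ are within $l_{1}$‑distance $2k$. These two facts let me convert a count of hexagons into an area estimate.

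For the first inequality, note that every selected hexagon meets $G$ by construction. If $H_{X}$ contains $q\in G$, then every $p\in H_{X}$ satisfies $\|p-q\|_{1}\le 2k$, so $p$ lies in the inflation $G^{+2k}:=\{p:\,d_{1}(p,G)\le 2k\}$. Thus the $|C_{in}|+|C_{out}|$ selected hexagons are interior‑disjoint subsets of $G^{+2k}$, and summing areas gives
\[(|C_{in}|+|C_{out}|)\,f(d_{avg}^{*})\le \mathrm{Area}\big(G^{+2k}\big).\]
The first claim then follows once I prove $\mathrm{Area}(G^{+2k})\le A+4kP+8k^{2}$. For the second inequality, fix $X\in C_{out}$, so $X\notin G$; since $G$ is closed, $X$ lies in the open exterior. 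The hexagon $H_{X}$ is connected and meets both $G$ and the exterior, so it must intersect $\partial G$; choosing $w\in H_{X}\cap\partial G$ and using $\mathrm{diam}(H_{X})\le 2k$ shows $H_{X}\subseteq(\partial G)^{+2k}$, the $2k$‑neighborhood of the boundary. Interior‑disjointness again yields
\[|C_{out}|\,f(d_{avg}^{*})\le \mathrm{Area}\big((\partial G)^{+2k}\big),\]
so the second claim reduces to $\mathrm{Area}((\partial G)^{+2k})\le 4kP+8k^{2}$.

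It remains to prove the two $l_{1}$‑inflation area estimates, which is where the real work lies. Both regions are Minkowski sums of $G$ (resp.\ $\partial G$) with the diamond $\diamond_{2k}=\{|x|+|y|\le 2k\}$, and the target coefficients are exactly what a Steiner‑type expansion suggests: a single unit square already gives $\mathrm{Area}(\square\oplus\diamond_{r})=A+rP+2r^{2}$, and $\mathrm{Area}(\diamond_{2k})=8k^{2}$. I would prove the bounds by sweeping the axis‑aligned boundary: each boundary edge contributes a linear strip of total area $rP$, and each corner contributes a quadratic piece, then sum.

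The main obstacle is that $G$ is an \emph{arbitrary} grid, so $\partial G$ may be highly non‑convex, with up to $\Theta(P)$ right‑angle corners, several components, and holes. A crude edge‑by‑edge union overcounts, and charging a separate $\Theta(r^{2})$ term to each of the $\Theta(P)$ corners would destroy the constant‑order $8k^{2}$ term. The point I must exploit is that the quadratic corner contributions \emph{telescope}: along any closed rectilinear boundary the convex and reflex corners cancel up to the turning number (equivalently, the Euler‑characteristic term in the Steiner formula is bounded and holes only reduce the inflated area), so the net $r^{2}$ contribution is a constant rather than growing with $P$. The generous linear coefficient $2r$ (versus the sharp $r$ of the convex case) then supplies the slack to absorb any residual corner and multi‑component effects and to pass from the boundary tube $(\partial G)^{+2k}$ to the full inflation $G^{+2k}$. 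Making this cancellation rigorous for a general union of unit squares is the step I expect to require the most care.
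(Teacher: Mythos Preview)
Your packing argument---placing the interior-disjoint hexagons of area $f(d_{avg}^{*})$ inside the inflation $G^{+2k}$ (for $C_{in}\cup C_{out}$) and inside the boundary tube $(\partial G)^{+2k}$ (for $C_{out}$), then dividing the area bound by $f(d_{avg}^{*})$---is exactly the paper's proof. The only difference is that where you flag the estimate $\mathrm{Area}\big((\partial G)^{+2k}\big)\le 4kP+8k^{2}$ as ``the real work'' and sketch a Steiner/turning-number cancellation, the paper does not prove it at all but simply invokes Lemma~10 of \cite{carlsson2014continuous}; your corner-telescoping concerns are therefore outsourced rather than resolved in the paper, and you need not carry them further for this proposition.
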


\begin{proof}
Let $\widetilde{G}$ be region covered by points in $C_{in}\cup C_{out}$. We have $G\subseteq \widetilde{G}$. Since any point in $C_{in}\cup C_{out}$ is either inside $G$ (those in $C_{in}$) or within distance $k$ of the boundary of $G$ (those in $C_{out}$), each point in $\widetilde{G}/G$ is within distance $k+k=2k$ of some point on the boundary of $G$ (the first $k$ bounds the distance from this point to a stop and the second $k$ bounds the distance from the stop to the boundary of $G$). Similar to the proof of Lemma 10 in \cite{carlsson2014continuous}, the area of region covered by the boundary of $G$ with coverage radius $2k$ is at most $4kP+8k^{2}$, i.e. $|\widetilde{G}/G|\leq 4kP+8k^{2}$. Therefore, $|\widetilde{G}|\leq A+4kP+8k^{2}$. Since each point in $C_{in}\cup C_{out}$ covers a unique hexagon of area $f(d_{avg}^{*})$ in the tessellation of $\mathbb{R}^{2}$, the total number of points in $C_{in}\cup C_{out}$ is at most $\frac{A+4kP+8k^{2}}{f(d_{avg}^{*})}$. Similarly, since the region covered by points in $C_{out}$ is a subset of region covered by the boundary of $G$  with coverage radius $2k$, we have $|C_{out}|\leq \frac{4kP+8k^{2}}{f(d_{avg}^{*})}$.
\end{proof}

From Proposition \ref{prop:stop-count-general}, the number of selected stops $T$ is at most
\begin{equation}\label{eqn:stop-count}
T\leq |C_{in}|+4|C_{out}|=|C_{in}|+|C_{out}|+3|C_{out}|\leq \frac{A+16kP+32k^{2}}{f(d_{avg}^{*})}.
\end{equation}

\subsection{Path Generation}\label{subsec:path-general}
We construct a spanning tree of the selected stops and use the spanning tree length to give an upper bound on path length.

\begin{proposition}\label{prop:path-length-general}
There exists a spanning tree of the selected stop set such that the tree length is at most $|C_{in}|d_{avg}^{*}+P$.
\end{proposition}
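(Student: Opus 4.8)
The plan is to assemble the spanning tree from two families of edges and to charge them separately against the two terms $|C_{in}|d_{avg}^*$ and $P$. The first family consists of vertical edges joining consecutive inside stops along each traversal; the second consists of the boundary $\partial G$ together with short segments that hang each vertical chain and each $C_{out}$ stop onto that boundary.

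First I would reuse the traversals $x=x_j$ from the stop selection in Section \ref{subsec:stop-general}. On each traversal the stops of $C_{in}$ are spaced $d_{avg}^*$ apart, and, since $G$ is a union of unit squares, $G\cap\{x=x_j\}$ is a finite union of vertical intervals. Within each interval I connect its consecutive inside stops by vertical segments, producing one chain per interval; a chain of $n$ stops costs $(n-1)d_{avg}^*$, so the vertical edges total $(|C_{in}|-\#\text{chains})\,d_{avg}^*$, leaving one unused $d_{avg}^*$ per chain.

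Next I would glue everything together using $\partial G$. Each stop of $C_{out}$ was projected onto the boundary, so once $\partial G$ (of length $P$) is included every such stop is already attached. For each chain I spend its spare $d_{avg}^*$ on a single segment from its topmost stop to the point directly above it where the interval meets $\partial G$; this segment has length at most $d_{avg}^*$, because the next lattice center above the top stop lies outside this interval of $G$ and hence the interval cannot rise more than $d_{avg}^*$ past that stop. The union of the vertical chains, these connectors, and $\partial G$ is then a connected set containing all selected stops, of total length at most $|C_{in}|d_{avg}^*+P$, and a minimum spanning tree of the stop set is no longer than this connected set, which proves the bound.

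The step I expect to be the true obstacle is the connectivity of this gluing rather than its length. If $G$ is convex, or merely simply connected, then $\partial G$ is a single closed curve and attaching one endpoint of each chain to it already yields a connected structure. For a general grid, however, $\partial G$ decomposes into an outer curve together with the boundaries of holes, and a chain lying between a hole and the outer boundary may be tied only to the hole while its other end, sitting on the outer boundary, is left dangling; connecting a single endpoint per chain no longer guarantees a connected graph. To repair this I would connect both endpoints of each chain to the boundary above and below it and absorb the extra connectors by bounding $\#\text{chains}\cdot d_{avg}^*$ through the number of times $\partial G$ crosses the traversals, which is itself controlled by $P$; verifying that this accounting keeps the total within the stated bound is the delicate point.
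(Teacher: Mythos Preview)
Your construction---vertical chains inside $C_{in}$, one connector of length at most $d_{avg}^*$ per chain to $\partial G$, and the full boundary of length $P$ to absorb the $C_{out}$ stops---is exactly the paper's argument, with the same accounting. The connectivity issue you flag for grids with holes is not addressed by the paper's proof either (it simply invokes ``an additional cycle around the perimeter,'' tacitly treating $\partial G$ as a single closed curve), so your proof is already as complete as the original, and your diagnosis of the hole case in fact goes beyond what the paper provides.
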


\begin{proof}
First, we connect all pairs of stops in $C_{in}$ that are consecutive (with distance $d_{avg}^{*}$) on the same traversal. This separates $C_{in}$ into several connected components. Each connected component consists of consecutive stops can be connected to the boundary of $G$ with an edge of length at most $d_{avg}^{*}$. The total edge lengths connecting stops in one connected component to the boundary of $G$ is thus at most $d_{avg}^{*}$ times the number of stops in the connected component. Summing over all connected components, the cost to connect all points in $C_{in}$ to the boundary of $G$ is at most $|C_{in}|d_{avg}^{*}$.

Note that stops that are not in $C_{in}$ are on the boundary of $G$; thus, we connect all selected stops with length at most $|C_{in}|d_{avg}^{*}+P$ with an additional cycle around the perimeter for stops on the boundary. This proves that the minimum spanning tree (MST) length of the stop set is at most $|C_{in}|d_{avg}^{*}+P$.
\end{proof}

Recall that the TSP length is at most twice as the MST length (\cite{lawler1985traveling}). The minimum path length $L$ connecting all selected stops, which is less than the TSP length of the selected stop set, is thus bounded by
\begin{equation}\label{eqn:path-length}
L\leq 2(|C_{in}|d_{avg}^{*}+P)\leq \frac{2d_{avg}^{*}(A+4kP+8k^{2})}{f(d_{avg}^{*})}+2P\leq \frac{2d_{avg}^{*}(A+6kP+8k^{2})}{f(d_{avg}^{*})}.
\end{equation}

\subsection{Proof of Theorem \ref{thm:general-cost}}\label{subsec:proof-thm2}
Using \eqref{eqn:stop-count} and \eqref{eqn:path-length}, the total path cost is at most
\[\alpha \cdot \frac{2d_{avg}^{*}(A+6kP+8k^{2})}{f(d_{avg}^{*})}+\beta\cdot \frac{A+16kP+32k^{2}}{f(d_{avg}^{*})}\leq (A+16kP+32k^{2})\cdot \frac{2(\alpha f(d_{avg}^{*})+\beta)}{f(d_{avg}^{*})}. \]

Using \eqref{eqn:sigma}, the right hand side equals to $2\sigma (A+16kP+32k^{2})$. Therefore, Theorem \ref{thm:general-cost} holds with $c_{1}=16, c_{2}=32$.

\subsection{Improved Upper Bound for a Convex Grid}\label{subsec:cpp-convex}
As an extension, we strengthen the cost upper bound in Theorem \ref{thm:general-cost} when $G$ is a convex grid. Informally, a grid is convex if it avoids $U$-shape subregions.
\begin{definition}[Convex Grid]\label{def:convex-grid}
	$G\subset \mathbb{R}^{2}$ is an orthogonal convex set if it is contiguous and for every line parallel to the $x-$axis or $y-$axis, the intersection of $G$ and the line is a point, a line segment or an empty set. A grid is convex if it is an orthogonal convex set.
\end{definition}

Note that the (2+$\varepsilon$)-approximation factor comes from path length estimation: the ratio between TSP length and MST length. When $G$ is a convex grid, we can simply connect the connected components defined in \ref{subsec:path-general} in an up-and-down fashion (with the path structure for a rectangular grid) so that the total path length is at most $|C_{in}|d_{avg}^{*}+2P$. Together with the upper bound on stop count in \ref{subsec:stop-general}, the path cost is at most $\sigma(A+16kP+32k^{2})$ for a convex grid, providing a (1+$\varepsilon$)-approximation for the CPP when $A>>kP$.

\section{Conclusion and Discussion}\label{sec:conclusion}
In this paper, we study the covering path problem on a grid of $\mathbb{R}^{2}$. We derive a cost lower bound based on previous results for the CPP on a rectangular grid. We then complement with a cost upper bound which is a function of area and perimeter of the grid. These results together, provide (2+$\varepsilon$)-approximation for CPP on a large grid and (1+$\varepsilon$)-approximation on a large convex grid.

Our results can be applied to estimate transportation costs for school zones with arbitrary shapes. For a given school zone, Proposition \ref{prop:cpp-opt} and Theorem \ref{thm:general-cost} give lower and upper bounds for a single path's travel cost. Prior research has shown that most school zones are ``well-behaved" when the objective is to maximize compactness of attendance areas (\cite{bouzarth2018assigning},\cite{carlsson2016shadow},\cite{caro2004school},\cite{lemberg2000school}), allowing our results for a convex grid to be applied to obtain (1+$\varepsilon$)-approximation.

Our methodology can incorporate other realistic considerations and yield similar approximation results. For school within a walking zone where students within a predetermined distance walk to school, the coverage region is a grid with an interior hole. Our bounds hold in this case where a grid's perimeter is the summation of its exterior and interior boundary lengths. To incorporate bus capacity, a single path can be partitioned into a series of paths according to bus capacity so that the total path cost remains the same, plus an additional detour term. These extensions are being used in ongoing work on school district planning.

\textbf{Acknowledgements}

\noindent This project is supported by the National Science Foundation (CMMI-1727744).

\footnotesize
\singlespacing
\bibliographystyle{abbrvnat}
\bibliography{Literature}

\end{document}